\documentclass[letterpaper, 10 pt, conference]{ieeeconf}  %

\IEEEoverridecommandlockouts                              %
\usepackage{amsmath} %
\usepackage{amssymb}  %
\usepackage{algorithmic}
\usepackage{algorithm}
\usepackage{pgfplots}
\usepackage{booktabs}
\pgfplotsset{compat=newest}
\usepackage{multirow}
\usetikzlibrary{plotmarks}
\usetikzlibrary{arrows.meta}
\usepgfplotslibrary{patchplots}
\usepackage{grffile}
\pgfplotsset{plot coordinates/math parser=false}
\newlength\figureheight
\newlength\figurewidth
\newcommand{\isMainDocument}{}
\usepackage{tikz}
\usepackage{pgfplots}
\usepackage{graphicx}
\usepackage{standalone} %
\usepackage{siunitx}
\pgfplotsset{compat=1.5}
\usepgfplotslibrary{units}
\usepgfplotslibrary{groupplots}
\usetikzlibrary{positioning,intersections, hobby, patterns, calc, decorations.pathmorphing, decorations.markings, shadows, shapes, plotmarks, shapes.multipart}
\usetikzlibrary{fit}
\usetikzlibrary{arrows.meta}
\usepackage{tikz}
\usepackage{amsmath}
\usepackage{bm}
\usepackage{threeparttable}
\usepackage{svg}
\usepackage[T1]{fontenc}
\usetikzlibrary{shapes.geometric, arrows.meta, positioning, fit, backgrounds, decorations.pathreplacing, calc}

\usepgfplotslibrary{external}

\title{\LARGE \bf
Bridging RL and MPC for mixed-integer optimal control with application to Formula 1 race strategies
}

\author{{Joschua W\"{u}thrich, Romir Damle, Giona Fieni, Melanie N. Zeilinger, Christopher H. Onder, Andrea Carron}%
\thanks{All authors are with the Institute for Dynamic Systems and Control, Department of Mechanical and Process Engineering, ETH Zurich, Switzerland \tt\small \{jwuethrich, rdamle, gfieni, mzeilinger, onder, carrona\}@ethz.ch}}

\begin{document}

\pgfplotsset{
            /pgfplots/layers/niceLayers/.define layer set={
                        axis background,
                        axis grid,
                        main,
                        axis ticks,
                        axis lines,
                        axis tick labels,
                        axis descriptions,
                        axis foreground
            }{/pgfplots/layers/standard}
}

\pgfplotsset{
            every axis/.append style={
                        set layers=niceLayers,
                        tick label style={font=\scriptsize},
                        clip marker paths=true,
                        line width=1pt,
                        line cap=round,
                        line join=round,
                        tick style={semithick, color=black},
                        legend style={
                        		   font=\scriptsize,
                                    /tikz/every even column/.append style={column sep=2mm},
                                    cells={anchor=west}, %
                        },
                        xmajorgrids,
                        ymajorgrids,
            }
}

\pgfkeys{/pgf/number format/.cd,1000 sep={}}

\maketitle
\thispagestyle{empty}
\pagestyle{empty}

\renewcommand{\thefootnote}{}
\footnotetext{\hspace{-1.5em} © 2026 IEEE. Personal use of this material is permitted. 
Permission from IEEE must be obtained for all other uses, in any current or future media, 
including reprinting/republishing this material for advertising or promotional purposes, 
creating new collective works, for resale or redistribution to servers or lists, or reuse 
of any copyrighted component of this work in other works.}
\renewcommand{\thefootnote}{\arabic{footnote}}
\begin{abstract}
We propose a hybrid reinforcement learning (RL) and model predictive control (MPC) framework for mixed-integer optimal control, where discrete variables enter the cost and dynamics but not the constraints. Existing hierarchical approaches learn a policy only for the discrete action space, leaving continuous optimization to MPC. Unlike these methods, we train an RL agent on the \textit{full} hybrid action space, such that the learned critic approximates the Q-function of the underlying Markov decision process. During deployment, the RL actor is rolled out over the prediction horizon to parametrize an integer-free nonlinear MPC through the discrete action sequence and provide a continuous warm-start. The learned critic can additionally serve as a terminal cost to capture long-term performance. We prove recursive feasibility, and validate the framework on a Formula 1 race strategy problem, where an ablation study identifies the contribution of each learned component. The hybrid method achieves near-optimal performance relative to an offline mixed-integer nonlinear program benchmark, outperforming a standalone RL agent. Moreover, the hybrid scheme enables adaptation to unseen disturbances through modular MPC extensions at zero retraining cost.
\end{abstract}

\section{INTRODUCTION}
Many real-world systems must simultaneously make discrete and continuous decisions. Heating and cooling systems combine discrete operating mode selection, such as direct solar supply, thermal storage discharge, and natural ventilation, with continuous control of heat pump power \cite{HybridHVAC}. Industrial process control handles discrete decisions such as valve switching, pump scheduling, or reactor mode selection and continuous regulation of flow rates, temperatures and pressures \cite{NPComplete}. Decision-making problems such as race strategies in Formula 1 (F1) racing \cite{F1RaceStrategy} optimize the continuous energy management of the hybrid power unit together with discrete pit stop choices.

From a control point of view, such systems naturally give rise to mixed-integer optimal control problems (MIOCPs). Model predictive control (MPC) \cite{MPC} and reinforcement learning (RL)~\cite{RL} are two distinct approaches for solving optimal control problems (OCPs) with complementary strengths. MPC provides safety and feasibility guarantees through constrained optimization problems, while RL naturally handles long-horizon sequential decision making.

In this paper, we propose a hybrid RL-MPC framework that exploits the complementary strengths of both methods for mixed-integer optimal control.

\subsection{Related work}
Recently, many research efforts have been made to combine RL and MPC in a unified framework for general OCPs to exploit their complementary strengths. RL commonly learns an approximation of the value function through data-driven exploration, capturing long-horizon behavior across the full state space, while MPC refines this approximation online through constrained optimization, replanning at each time step from the current state \cite{BertsekasAlphaZero, Mesbah}. This global-local perspective has emerged as a unifying lens for understanding RL-MPC integration, where RL supplies long-term information and MPC refines decisions locally under constraints. Three main streams of possible RL-MPC synthesis approaches can be distinguished: value-augmented MPC, differentiable MPC, and hierarchical RL-MPC schemes \cite{ReiterSurvey}.

Value-augmented MPC integrates a learned value function as a terminal cost in the MPC objective, allowing the reduction of the required planning horizon while improving long-term performance \cite{DeepValueMPC, PredictiveControlValueIteration, AC4MPC}. However, existing formulations do not exploit mixed-integer structure, and the computational cost of mixed-integer optimization remains.

Differentiable MPC replaces neural network function approximators with constrained optimization layers within policy-gradient or actor-critic updates \cite{Gros_Zanon_ENMPC, DiffableMPC_Amos}, yielding more interpretable policies with explicit constraint handling. However, constrained optimization problems must be solved during both offline training and online deployment, which is computationally demanding, and formulating MIOCPs in this setting \cite{Gros_Zanon_MIOCP} increases the burden further.

Hierarchical RL-MPC schemes decompose the problem across two levels, with RL approximating a global policy and MPC serving as a local refinement module. Examples include guided policy search \cite{GuidedPolicySearch}, MPC-Net \cite{MPC-Net}, and related frameworks in which an MPC generates trajectories used for policy learning \cite{ImitationLearning}. However, these methods do not specifically address the hybrid action space structure of mixed-integer OCPs. 

Existing hierarchical schemes that address the structure of hybrid action spaces learn a policy that maps the current state to a discrete action trajectory over the prediction horizon~\cite{daSilva, Dong}. This trajectory parametrizes the MPC, which optimizes only the continuous actions. In the underlying hybrid Markov decision process (MDP), the Q-function maps a state and a single hybrid action to the expected cumulative future reward, also called value~\cite{RL}. Neither approach learns such a Q-function. In~\cite{daSilva}, the learned Q-functions map the current state and a predicted discrete action trajectory to a value, and are decoupled across prediction horizon steps. They are thus defined on a different domain than the Q-function of the hybrid MDP, and the discrete decisions are not optimized against the latter. Moreover, since their input dimension is tied to the prediction horizon, changing the horizon requires retraining. In~\cite{Dong}, the policy is obtained by supervised imitation of a mixed-integer MPC, so no approximation of the Q-function is learned and the policy's performance is tied to the demonstrations it is fitted~to.

\subsection{Contribution} 
The contributions of this paper are threefold.

First, we propose a hybrid RL-MPC framework that trains the RL agent on the full hybrid action space, such that the learned critic approximates the Q-function of the underlying hybrid MDP. We focus on systems where discrete inputs affect the cost and dynamics but not the feasibility of constraints. During deployment, the RL actor is rolled out over the MPC prediction horizon using the system model to parametrize and warm-start a continuous MPC. The RL critic can additionally serve as a terminal cost, capturing long-term performance beyond the planning horizon. This reduces the online problem to a continuous OCP, enabling efficient deployment.

Second, we prove that under a structural assumption on the integer decisions, the proposed framework guarantees closed-loop constraint satisfaction. 

Third, we validate the framework on an F1 race strategy case study, evaluating performance and adaptability of the hybrid RL-MPC framework, including an ablation study of the learned components.

\subsection{Outline}
The remainder of this paper is organized as follows: Section \ref{sec:preliminaries} introduces the class of problems addressed by our method, their MDP formulation, and basics on RL and MPC. The proposed framework and its closed-loop constraint satisfaction property are presented in Section \ref{sec:framework}. We validate our method with an F1 race strategy case study in Section~\ref{sec:casestudy} and conclude our work in Section \ref{sec:conclusion}.

\section{PRELIMINARIES}
\label{sec:preliminaries}
We consider dynamic systems with hybrid action spaces that give rise to MIOCPs. This section formalizes the problem class that can be addressed with our proposed framework, and briefly introduces RL and MPC as complementary solution methods. Throughout this paper, $j\in \mathbb{N}$ denotes the closed-loop time step and $k\in\{0,...,N\}$ the open-loop prediction index of a finite-horizon problem. Closed-loop states, continuous and discrete inputs are denoted by $x(j), u^\mathrm{c}(j), u^\mathrm{d}(j)$, respectively, while open-loop predicted quantities are denoted by ${x}_{k}, {u}_{k}^\mathrm{c}, {u}_{k}^\mathrm{d}$. For the remainder of this paper, we will follow the control convention and work with cost functions rather than value functions. For a given policy $\pi$, the cost function~$J^\pi(x)$ is the negative of the value function in standard RL literature, and the state-action cost function $C^\pi(x,u)$ is the negative of the Q-function.

\subsection{Problem formulation}
We consider deterministic, discrete-time systems of the form
\begin{equation}
    x(j+1) = f \left(x(j), u^\mathrm{c}(j), u^\mathrm{d}(j) \right),
\end{equation}
where $x(j) \in \mathcal{X} \subset \mathbb{R}^{n_\mathrm{x}}$ is the state, $u^\mathrm{c}(j) \in \mathcal{U}_\mathrm{c} \subset \mathbb{R}^{n_\mathrm{c}}$ is the continuous input, $u^\mathrm{d}(j) \in \mathcal{U}_\mathrm{d} \subset \mathbb{Z}^{n_\mathrm{d}}$ is the discrete input, with $\mathcal{U}_\mathrm{d}$ finite, and $f: \mathcal{X}\times \mathcal{U}_\mathrm{c}\times \mathcal{U}_\mathrm{d} \rightarrow \mathcal{X}$ denotes the system dynamics.

For these systems, we seek policies $\pi(u|x)$, which are in general stochastic and minimize the discounted infinite-horizon cost 
\begin{equation}
    J^\pi\left(x(0)\right)=\mathbb{E}_\mathrm{\pi} \left[ \sum_{j=0}^{\infty} \gamma^j \ell \left(x(j), u^\mathrm{c}(j), u^\mathrm{d}(j)\right) \right],
    \label{eq:MDPcost}
\end{equation}
with stage cost $\ell: \mathcal{X} \times \mathcal{U}_\mathrm{c} \times \mathcal{U}_\mathrm{d} \rightarrow \mathbb{R}$ and discount factor~$\gamma \in~(0,1]$, where episodic problems of finite length are covered by an absorbing terminal state with zero stage cost.

The specific problem class that can be addressed with our framework is defined through the following assumption.
\newtheorem{assumption}{Assumption}
\begin{assumption}
    For any $x(j) \in \mathcal{X}$ and $u^\mathrm{d}(j) \in \mathcal{U}_\mathrm{d}$, there exists $u^\mathrm{c}(j) \in \mathcal{U}_\mathrm{c}$ such that $f\left(x(j), u^\mathrm{c}(j), u^\mathrm{d}(j)\right) \in \mathcal{X}$.
    \label{ass:one}
\end{assumption}
\newtheorem{remark}{Remark}
\begin{remark}
    Assumption \ref{ass:one} ensures that feasibility of the optimization problem can always be maintained through the continuous control input, regardless of the discrete input chosen by the RL agent. This is necessary as standard RL methods provide no inherent constraint satisfaction guarantees \cite{SafeRL}.
    This assumption is satisfied in many engineering applications, particularly in economic cost settings where discrete inputs select operating modes that affect system performance but not the feasibility of physical constraints. However, it can be restrictive when some discrete inputs directly limit the reachability of $\mathcal{X}$, e.g., by disabling an actuator needed for constraint satisfaction.
\end{remark}

The combinatorial structure of this problem class makes online mixed-integer optimization challenging.

\subsection{Markov decision process reformulation}
\label{subsec:MDP}
The presented MIOCP can be reformulated as a hybrid MDP. Denoting the joint action $u=\left(u^\mathrm{c}, u^\mathrm{d}\right)\in\mathcal{U}_\mathrm{c}\times \mathcal{U}_\mathrm{d}$, the MDP is defined by the previously introduced definitions for the state and action space, state transition function, stage cost, and discount factor
\begin{equation}
    \mathcal{M} = \left(\mathcal{X}, \mathcal{U}_\mathrm{c} \times \mathcal{U}_\mathrm{d}, f, \ell, \gamma \right),
    \label{eq:MDP}
\end{equation}
where the Markov property follows directly from the system dynamics. For a given policy $\pi(u|x)$, the state-action cost function of the MDP is
\begin{equation}
     C^\pi(x,u)=\mathbb{E}_\mathrm{\pi} \left[ \sum_{j=0}^{\infty} \gamma^j \ell\left(x(j), u(j)\right) \bigg| \, \begin{aligned} x(0)&=x,\\ u(0)&=u \end{aligned} \right],
     \label{eq:critic}
\end{equation}
from which the cost function \eqref{eq:MDPcost} follows as
\begin{equation}
    J^\pi(x) = \mathbb{E}_\mathrm{u\sim\pi(\cdot|x)} \left[ C^\pi(x,u)\right].
    \label{eq:cost}
\end{equation}
The optimal cost function $J^*(x)$ satisfies the Bellman equation 
\begin{equation}
    J^*(x) = \min_{u\in\mathcal{U}_\mathrm{c}\times\mathcal{U}_\mathrm{d}} \left[\ell(x,u) + \gamma J^*\left(f(x,u)\right) \right],
    \label{eq:Bellman}
\end{equation}
which also holds for the optimal state-action cost function~$C^*(x,u)$
\begin{equation}
    C^*(x,u) = \ell(x,u) + \gamma \min_{u'\in \mathcal{U}_\mathrm{c}\times \mathcal{U}_\mathrm{d}} C^*\left(f(x,u), u'\right).
    \label{eq:bellman_optimality_eqn}
\end{equation}

Solving the Bellman equation, \eqref{eq:Bellman} or \eqref{eq:bellman_optimality_eqn}, through exact dynamic programming (DP) becomes intractable due to the curse of dimensionality arising from the continuous state and action spaces \cite{Bertsekas}. In the subsections below, we present two methods that solve approximations to these equations.

\subsection{RL for mixed-integer control}
Depending on the algorithm, RL approximately solves either \eqref{eq:Bellman} or \eqref{eq:bellman_optimality_eqn} through data-driven exploration of the state and action spaces. $C^*(x,u)$ and $J^*(x)$ are approximated by parametrized function approximators $C_\mathrm{\theta}(x,u)$ and $J_\mathrm{\theta}(x)$, which are typically represented as neural networks. In the following, we present all expressions in terms of $C_\mathrm{\theta}(x,u)$, consistent with the RL algorithm used in our numerical study in Section~\ref{sec:casestudy}. However, all statements hold analogously for $J_\mathrm{\theta}(x)$. The parameters $\theta$ of the function approximator~$C_\mathrm{\theta}(x,u)$ are updated by minimizing the expected squared Bellman residual
\begin{equation}
\begin{split}
    \min_{\theta} \; \mathbb{E}_{(x,u)\sim \mathcal{D}} \big[\big(&C_\mathrm{\theta}(x,u)-\ell(x,u) \\
    &-\gamma \min_{u'}C_{\Bar{\theta}}\big(f(x,u),u'\big)\big)^{2}\big],
\end{split}
    \label{eq:rl_bellman}
\end{equation}
where $\mathcal{D}$ denotes the distribution of state-action pairs collected through the MDP $\mathcal{M}$ \eqref{eq:MDP} and stored in a replay buffer, and $\Bar{\theta}$ are target parameters held fixed during the update. Since the dynamics $f$ are deterministic, the expectation is taken over $\mathcal{D}$ alone. Given an approximation $C_\mathrm{\theta}(x,u)$, a parametrized policy $\pi_\phi(u|x)$ can be obtained through
\begin{equation}
    \phi^* = \arg  \min_{\phi} \; \mathbb{E}_\mathrm{x\sim\mathcal{D}, \, u\sim\pi_\phi(\cdot|x)} \left[C_\mathrm{\theta}\left(x,u\right)\right].
\end{equation}

Connecting back to the global-local framework \cite{BertsekasAlphaZero, Mesbah}, RL globally approximates the MDP cost \eqref{eq:MDPcost} through \eqref{eq:critic} and~\eqref{eq:cost} over the state space by training on the full hybrid action space $\mathcal{U}_\mathrm{c} \times \mathcal{U}_\mathrm{d}$. Optimization of the parameters $\theta$ relies on exploration to capture long-term effects of continuous and discrete actions, and minimizing \eqref{eq:rl_bellman} approximates the state-action cost function of the underlying hybrid MDP.

In actor-critic RL algorithms \cite{ActorCritic}, the policy $\pi_\mathrm{\phi}(u|x)$ serves as the actor, selecting actions, while $C_\mathrm{\theta}(x,u)$ serves as the critic, evaluating them and guiding the actor's updates during training. 

Two properties of RL are particularly relevant for mixed-integer control. Rather than explicitly enumerating and directly optimizing over the discrete action space as in branch-and-bound methods \cite{branch_and_bound}, RL optimizes the parameters $\theta$ and $\phi$ of $C_\mathrm{\theta}(x,u)$ and $\pi_\mathrm{\phi}(u|x)$, from which discrete actions follow implicitly. Furthermore, the dominant computational effort is shifted to offline training. Online deployment is reduced to a single forward pass through $\pi_\mathrm{\phi}(u|x)$, independent of the length of the optimization horizon.

However, standard RL alone does not enforce state or input constraints, and approximation errors in $C_\mathrm{\theta}(x,u)$ or $\pi_\mathrm{\phi}(u|x)$ may lead to suboptimal performance, where function approximators lack the resolution of model-based optimization.

\subsection{MPC for mixed-integer control}
MPC approximates the MDP cost \eqref{eq:MDPcost} locally by repeatedly solving the finite-horizon MIOCP
\begin{equation}
    \begin{aligned}
        \min_{ \{{x}_{k}, {u}_{k}^\mathrm{c}, {u}_{k}^\mathrm{d} \}} & \sum_{k=0}^{N-1} \ell\left({x}_{k}, {u}_{k}^\mathrm{c}, {u}_{k}^\mathrm{d}\right) + J_\mathrm{f}\left({x}_{N}\right) \\
        \text{subject to} \quad & {x}_{k+1} = f\left({x}_{k}, {u}_{k}^\mathrm{c}, {u}_{k}^\mathrm{d}\right) \\
        & {x}_0 = x(j) \\
        & {x}_{k} \in \mathcal{X}, x_{N} \in \mathcal{X}_\mathrm{f} \\
        & {u}_{k}^\mathrm{c} \in \mathcal{U}_\mathrm{c}, {u}_{k}^\mathrm{d} \in \mathcal{U}_\mathrm{d},
    \end{aligned}
    \label{eq:MPC_OCP}
\end{equation}
which is initialized at the current closed-loop state~$x(j)$, where $J_\mathrm{f}(x)$ is a terminal cost approximating the cost-to-go beyond the prediction horizon $N$, and $\mathcal{X}_\mathrm{f} \subseteq \mathcal{X}$ is a terminal set ensuring recursive feasibility. The horizon length~$N$ trades off computational cost with approximation quality. At each closed-loop time step $j$, only the first optimal action~$({u}_{0}^\mathrm{c,*}, {u}_{0}^\mathrm{d,*})$ is applied to the system before replanning over a fixed horizon~$N$, yielding a feedback policy through receding-horizon control. 

MPC offers strong theoretical properties in constrained settings, including recursive feasibility and closed-loop constraint satisfaction under standard assumptions \cite{MPC}. However, in a general formulation with a hybrid action space, the OCP \eqref{eq:MPC_OCP} is a mixed-integer nonlinear program (MINLP), which is NP-hard \cite{NPHard}. Increasing horizon length improves approximation quality but significantly increases computational complexity.

\section{HYBRID RL-MPC FRAMEWORK}
\label{sec:framework}
The complementary strengths identified in Section \ref{sec:preliminaries} motivate a structured combination of RL and MPC for mixed-integer control. RL provides a global cost approximation, while MPC offers model-based optimization capable of enforcing constraints and refining continuous actions locally. Under Assumption \ref{ass:one}, discrete decisions influence performance but not feasibility. We first present the policy architecture before establishing recursive feasibility of the closed-loop scheme.

\subsection{Policy architecture}
\label{subsec:policy}
In our proposed framework, an RL agent is trained offline via an actor-critic algorithm on the hybrid action space~$\mathcal{U}_\mathrm{c}\times~\mathcal{U}_\mathrm{d}$ of the underlying MDP. This yields an actor $\pi_\mathrm{\phi}(u|x)$ and a critic $C_\mathrm{\theta}(x,u)$. Following the control convention introduced in Section \ref{sec:preliminaries}, $C_\mathrm{\theta}(x,u)$ corresponds to the negative of the Q-function of standard RL literature. Since the agent is trained on the full hybrid action space, the resulting critic approximates the state-action cost function of the underlying hybrid MDP, contrary to methods that learn trajectory-conditioned or decoupled Q-functions \cite{daSilva}. Therefore, the actor $\pi_\mathrm{\phi}(u|x)$ is also informed by the hybrid nature of the problem, ensuring that the policy reflects the joint structure of the hybrid action space~$\mathcal{U}_\mathrm{c}\times \mathcal{U}_\mathrm{d}$.

With the actor and critic obtained offline, we present the closed-loop online deployment in Fig. \ref{fig:rl_mpc_framework}. At each closed-loop time step $j$, the RL actor $\pi_\phi(u|x)$ is rolled out deterministically over the prediction horizon $N$ using the system model~$f$, starting from the current state $x(j)$. This rollout yields a discrete input sequence  
\begin{equation}
    \left\{{u}_{k}^\mathrm{d,RL}\right\}_{k=0}^N,
    \label{eq:disc_rollout}
\end{equation}
and a corresponding continuous sequence $\{{u}_{k}^\mathrm{c,RL}\}_{k=0}^N$ that we can use as a warm-start, allowing the MPC to refine it. Given the discrete input sequence \eqref{eq:disc_rollout}, the MPC formulation~\eqref{eq:MPC_OCP} is parametrized and reduced to the following integer-free nonlinear program (NLP) formulation
    \begin{align}
        \min_{ \{{x}_{k}, {u}_{k}^\mathrm{c} \}} & \sum_{k=0}^{N-1} \ell\left({x}_{k}, {u}_{k}^\mathrm{c} \,| \,{u}_{k}^\mathrm{d,RL}\right) + C_\mathrm{\theta}\left({x}_{N}, {u}_{N}^\mathrm{c} \, |\, {u}_{N}^\mathrm{d,RL}\right) \nonumber \\
        \text{subject to} \: & {x}_{k+1} = f\left({x}_{k}, {u}_{k}^\mathrm{c} \, | \,{u}_{k}^\mathrm{d,RL}\right) \nonumber\\
        & {x}_0 = x(j) \label{eq:reduced_MPC}\\
        & {x}_{k} \in \mathcal{X}, x_{N} \in \mathcal{X}_\mathrm{f}\nonumber \\
        & {u}_{k}^\mathrm{c} \in \mathcal{U}_\mathrm{c}, \nonumber
    \end{align}
where the terminal cost can be given by the learned critic,~$C_\mathrm{\theta}({x}_{N}, {u}_{N}^\mathrm{c}\,|\, {u}_{N}^\mathrm{d,RL})$, evaluated at the terminal action $({u}_N^\mathrm{c}, {u}_N^\mathrm{d,RL})$. This extends the input sequence to $k=N$ rather than $N-1$, making ${u}_N^\mathrm{c}$ an additional decision variable compared to standard receding-horizon formulations, which is not required for a terminal cost of the form $J_\mathrm{\theta}(x_N)$. Whether the critic is beneficial as a terminal cost is problem-dependent~\cite{AC4MPC}, and examined empirically in Section~\ref{sec:casestudy}.

\begin{figure}
  \centering
  \vspace{0.2cm}
  \resizebox{0.95\columnwidth}{!}{%
    \input{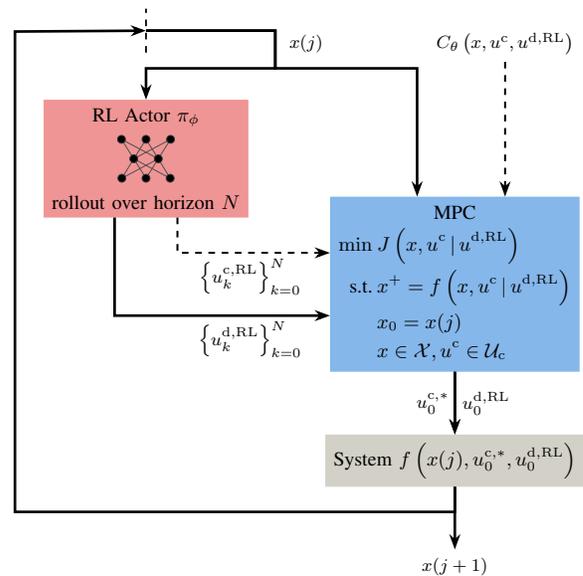}
  }
  \vspace{-0.2cm}
  \caption{Hybrid RL-MPC framework. The RL actor is rolled out over the prediction horizon $N$ to supply discrete action trajectories and a continuous warm-start to the MPC, which optimizes continuous inputs with the critic~$C_\theta(x,u^\mathrm{c},u^\mathrm{d})$ as a potential terminal cost. Only the first continuous input ${u}_{0}^\mathrm{c,*}$ from the MPC and the first discrete input ${u}_{0}^\mathrm{d,RL}$ from the RL agent are applied to the system.}
  \label{fig:rl_mpc_framework}
  \vspace{-0.5cm}
\end{figure}

Once the MPC is solved, only the first discrete input ${u}_{0}^\mathrm{d,RL}$ from the RL rollout and the first continuous input ${u}_{0}^\mathrm{c,*}$ from the MPC solution are applied to the system in a receding-horizon fashion.

\begin{remark}
    The continuous warm-start sequence provided by the RL actor initializes the NLP solver, such as interior-point \cite{InteriorPointMethods} or sequential quadratic programming (SQP) methods \cite{SQP}. This warm-start reduces solver iterations and may lower the risk of convergence to poor local minima, while allowing the MPC layer to refine the sequence through model-based optimization.
\end{remark}

\subsection{Recursive feasibility}
Recursive feasibility guarantees that once the hybrid scheme is initialized in a feasible state, the optimization problem remains feasible at all subsequent closed-loop time steps. In the following, we establish recursive feasibility for the receding-horizon implementation of the hybrid RL-MPC scheme, which relies on the slightly modified standard MPC assumption \cite{MPC} specified next.

\begin{assumption}
    There exists a terminal set $\mathcal{X}_\mathrm{f} \subseteq \mathcal{X}$ and a terminal control law $\kappa_\mathrm{f}(x)$ such that $\mathcal{X}_\mathrm{f}$ is positively invariant under $\kappa_\mathrm{f}$ for any $u^\mathrm{d} \in \mathcal{U}_\mathrm{d}$, all state and input constraints are satisfied in $\mathcal{X}_\mathrm{f}$, and the $N$-step backward reachable set of $\mathcal{X}_\mathrm{f}$ is independent of $u^\mathrm{d}$.
    \label{ass:two}
\end{assumption}

\newtheorem{proposition}{Proposition}
\begin{proposition}
    Let Assumptions \ref{ass:one} and \ref{ass:two} hold. If the reduced MPC problem \eqref{eq:reduced_MPC} is feasible at time $j=0$, then it remains feasible for all $j \in \mathbb{N}$, and consequently the hybrid RL-MPC scheme is recursively feasible.
\end{proposition}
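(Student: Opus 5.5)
The proof follows the standard shifted-sequence argument for MPC recursive feasibility. The complication is that the discrete sequence parametrizing \eqref{eq:reduced_MPC} at time $j+1$ comes from a fresh RL rollout from $x(j+1)$, not from the tail of the sequence used at time $j$. So the shifted continuous sequence generally does not match the new dynamics, and the classical candidate solution has to be replaced by an argument that relies on Assumption~\ref{ass:two}. The induction hypothesis is that \eqref{eq:reduced_MPC} is feasible at time $j$ for the discrete sequence $\{u_k^\mathrm{d,RL}\}$ rolled out from $x(j)$.

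\emph{Step 1 (one closed-loop step).} Let $\{x_k^*, u_k^\mathrm{c,*}\}$ be the optimal (or any feasible) solution at time $j$. Applying $(u_0^\mathrm{c,*}, u_0^\mathrm{d,RL})$ gives
\[
x(j+1) = f\bigl(x(j), u_0^\mathrm{c,*}, u_0^\mathrm{d,RL}\bigr) = x_1^*,
\]
since the model is exact and deterministic. The state $x_1^*$ reaches $\mathcal{X}_\mathrm{f}$ in $N-1$ steps under the old discrete tail $u_1^\mathrm{d,RL},\dots,u_{N-1}^\mathrm{d,RL}$, with all constraints satisfied along the way. Appending the terminal law $\kappa_\mathrm{f}$, which keeps $\mathcal{X}_\mathrm{f}$ invariant for any discrete input by Assumption~\ref{ass:two}, we conclude that $x(j+1)$ lies in the $N$-step backward reachable set of $\mathcal{X}_\mathrm{f}$ associated with the discrete sequence $(u_1^\mathrm{d,RL},\dots,u_{N-1}^\mathrm{d,RL},\bar u^\mathrm{d})$, for an arbitrary $\bar u^\mathrm{d}\in\mathcal{U}_\mathrm{d}$. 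This set is understood as the set of states from which admissible continuous inputs steer into $\mathcal{X}_\mathrm{f}$ in $N$ steps while keeping $x_k\in\mathcal{X}$.

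\emph{Step 2 (change of discrete sequence).} At time $j+1$ the RL actor produces a new sequence $\{\tilde u_k^\mathrm{d,RL}\}_{k=0}^N$ from $x(j+1)$. By the last clause of Assumption~\ref{ass:two}, the $N$-step backward reachable set of $\mathcal{X}_\mathrm{f}$ does not depend on the discrete sequence. Hence $x(j+1)$ also belongs to the backward reachable set for $\{\tilde u_k^\mathrm{d,RL}\}$. By definition, this means there exist $\tilde u_k^\mathrm{c}\in\mathcal{U}_\mathrm{c}$ whose state trajectory under $\tilde u_k^\mathrm{d,RL}$ satisfies $x_k\in\mathcal{X}$ and $x_N\in\mathcal{X}_\mathrm{f}$. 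That is exactly feasibility of \eqref{eq:reduced_MPC} at $j+1$.

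The terminal-cost input $u_N^\mathrm{c}$ is an unconstrained-by-dynamics decision variable in $\mathcal{U}_\mathrm{c}$, so any choice (for example $\kappa_\mathrm{f}(x_N)$) works. Assumption~\ref{ass:one} can be cited to support that each individual step admits a continuous input keeping the state in $\mathcal{X}$, but the backward-reachability clause is what carries the multi-step terminal requirement. Induction on $j$ then gives feasibility for all $j\in\mathbb{N}$.

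\emph{Main obstacle.} Step 2 is the hardest part, because it requires pinning down what ``independent of $u^\mathrm{d}$'' means so that it covers time-varying discrete sequences and includes the path constraints $x_k\in\mathcal{X}$, not only the endpoint. I would first make this interpretation explicit: the set of states admitting an admissible $N$-step continuous input sequence into $\mathcal{X}_\mathrm{f}$ is the same for every sequence in $\mathcal{U}_\mathrm{d}^N$. If the assumption is only meant for constant $u^\mathrm{d}$, I would fall back on chaining the per-step invariance of $\mathcal{X}_\mathrm{f}$ and Assumption~\ref{ass:one} stage by stage. However, I expect that route to need the stated independence in any case.
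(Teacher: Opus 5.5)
Your proof is correct and follows the same route as the paper: the standard shifting argument on the continuous inputs (with $\kappa_\mathrm{f}$ appended, which is valid for any discrete input), the $u^\mathrm{d}$-independence of the $N$-step backward reachable set in Assumption~\ref{ass:two} to carry feasibility over to the freshly rolled-out discrete sequence, and the observation that the extra input $u_N^\mathrm{c}$ does not affect $x_N\in\mathcal{X}_\mathrm{f}$. Your Steps 1 and 2 spell out what the paper compresses into ``feasibility is independent of the discrete input sequence,'' including the reading of that clause for time-varying discrete sequences with the path constraints included, which the argument needs.
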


\begin{proof}
    By Assumptions \ref{ass:one} and \ref{ass:two}, feasibility is independent of the discrete input sequence generated by the RL actor. The proof therefore follows the standard shifting argument for receding-horizon MPC under Assumption \ref{ass:two} \cite{MPC}, applied solely to the continuous component. Note that while our MPC formulation \eqref{eq:reduced_MPC} optimizes over inputs up to $k=N$ due to the terminal cost, the standard argument for recursive feasibility requires that $x_{N} \in \mathcal{X}_\mathrm{f}$, which depends only on the input sequence up to $k=N-1$, and thus applies without modification.
\end{proof}
\begin{remark}
    Recursive feasibility and constraint satisfaction are handled entirely by the MPC layer, while the RL component influences only performance. Consequently, the hybrid scheme inherits recursive feasibility directly from the underlying MPC formulation.
\end{remark}

Beyond recursive feasibility, closed-loop performance remains an important consideration in general cost settings. However, deriving formal performance bounds is difficult due to the complexity of the hybrid action space and the learning-based elements that depend on learning quality. We therefore assess closed-loop performance of the hybrid RL-MPC method empirically in the subsequent section.

\section{NUMERICAL EXAMPLE}
\label{sec:casestudy}
To validate the proposed hybrid RL-MPC framework, we apply it to an F1 race strategy problem, with the goal to determine the discrete pit stop decisions and the continuous energy allocation over a race. The problem is a natural fit, involving tightly coupled discrete and continuous actions and a well-defined performance objective. Furthermore, Assumption \ref{ass:one} is satisfied, since for any pit stop decision, the continuous energy allocation can always be chosen to maintain feasibility, as the pit stop decisions affect only the performance objective but not the feasibility of the energy constraints. While race strategies are typically precomputed offline, unforeseen events during the race require fast online re-optimization. This motivates evaluating computation time alongside suboptimality throughout this section.

We adopt the problem setup of \cite{F1RaceStrategy} as our benchmark, where both a MINLP and a standalone RL agent were developed and evaluated.

\subsection{Problem setup}
Since 2014, F1 cars have featured hybrid-electric power units, combining a turbocharged internal combustion engine with an electric motor generator unit connected to a battery. The resulting energy management is subject to strict regulatory bounds on battery capacity, fuel load, and energy deployment \cite{FIAsporting, FIAtechnical}. The race is discretized on a lap-by-lap basis, indexed by $j \in \{0,1,\ldots,N_\mathrm{laps}\}$. The continuous inputs are the allocated battery and fuel energy per lap, $\Delta E_\mathrm{b,all} \in~\left[\Delta E_\mathrm{b,min}, \Delta E_\mathrm{b,max} \right]$ and $\Delta E_\mathrm{f,all} \in~\left[\Delta E_\mathrm{f,min}, \Delta E_\mathrm{f,max} \right]$, while the discrete input is the pit stop decision $\mathrm{PS}\in \left\{0,1,2,3\right\}$, where $\mathrm{PS}=0$ denotes no stop and $\mathrm{PS}\in \{1,2,3\}$ denotes a tire change to the soft~(S), medium (M), or hard (H) compound, respectively. The physical states are the battery energy $E_\mathrm{b}$, fuel energy $E_\mathrm{f}$, car mass $m_\mathrm{car}$, tire wear $\mathrm{TW}$, tire compound~$\mathrm{TC}$, and race time $T_\mathrm{race}$. Regulatory constraints impose $E_\mathrm{b} \in~\left[0, E_\mathrm{b,max}\right]$ and $E_\mathrm{f} \in \left[0, E_\mathrm{f,race}\right]$, where $E_\mathrm{b,max}$ is the maximum allowed battery capacity and $E_\mathrm{f,race}$ is the maximum allowed fuel load per race. The remaining states are not subject to explicit constraints beyond the system dynamics.

The core of the model is the lap time, which is defined as
\begin{equation}
    \begin{aligned}
        T_\mathrm{lap} = \: & T_\mathrm{nom}\left(E_\mathrm{b}, \Delta E_\mathrm{b,all}, \Delta E_\mathrm{f,all}, m_\mathrm{car}, \mathrm{PS}\right) \\
                          &+\Delta T^{\mathrm{TC}}\left(\mathrm{TW}\right),
    \end{aligned}
    \label{eq:Tlap}
\end{equation}
where $T_\mathrm{nom}$ maps the given input to the nominal lap time. The second term $\Delta T^{\mathrm{TC}}$ relates tire wear $\mathrm{TW}$ and tire compound $\mathrm{TC}$ to additional lap time. Tire wear evolves according to a degradation model that captures compound-specific degradation rates and dependence on vehicle mass~\cite{F1RaceStrategy}. The optimization objective is to minimize the total race time~$T_\mathrm{race}$.

Building on the problem setup of \cite{F1RaceStrategy}, we introduce two modifications. First, we refine the MINLP formulation, improving its numerical conditioning and reducing the solve time from \qty{50}{s} to approximately \qty{3}{s}. Second, we use a narrower and deeper critic architecture than~\cite{F1RaceStrategy}, which improves the performance of the trained RL agent. 

For the hybrid RL-MPC framework, we use the Soft Actor-Critic (SAC) algorithm, which offers strong sample efficiency through off-policy training and was introduced for hybrid action spaces in \cite{HybridSAC}, building on \cite{SAC} and \cite{DiscreteSAC}. The critic is trained with ReLU activation functions. Where it is used as a terminal cost, these are replaced at deployment by a smooth approximation~\cite{relusmoothing} to ensure continuous differentiability required by gradient-based NLP solvers, which we verified leaves the cost-to-go estimate essentially unchanged.

\textit{Implementation details:} The RL agent is trained following the setup of \cite{F1RaceStrategy}, with the modified critic architecture described above, using a discount factor of $\gamma=0.9999$. Since the race is finite, the receding-horizon scheme transitions to a shrinking horizon in the final $N$ laps, where no terminal cost is applied as no cost is incurred beyond the end of the race. The reduced NLP \eqref{eq:reduced_MPC} is implemented in CasADi \cite{casadi} and solved to convergence with IPOPT \cite{ipopt} on a commercial laptop (Lenovo ThinkPad, Intel Core Ultra 7, 64 GB RAM), yielding a local rather than a guaranteed global optimum as with any nonconvex NLP.

\subsection{Numerical results}
\label{subsec:benchmark}
\subsubsection{Baselines} We benchmark the hybrid RL-MPC framework against the MINLP optimal solution and the standalone RL agent. The MINLP is solved offline and serves as the optimal baseline that online methods cannot be expected to match in general. The standalone RL method is evaluated online in closed loop. Suboptimalities and mean computation times are summarized in Table~\ref{tab:trace_tcomp_nom}. For the RL agent, the computation time corresponds to a single forward pass, with the full rollout over $N$ steps scaling approximately linearly. For the hybrid RL-MPC, it is the NLP solve time, and for the MINLP, its solve time.

Selected race strategies are shown in Fig.~\ref{fig:results_benchmark}, where the first plot shows the fuel allocation, which is expressed as a percentage of the nominal allocation, where $100\%$ corresponds to the constant per-lap allocation that exactly depletes the fuel budget over the race. The second plot shows the battery energy allocation, where positive values denote charging and negative values discharging. The pit stop plot at the bottom shows the timing and compound selection.

\begin{table}
    \vspace{0.2cm}
    \caption{Performance across methods, horizons, and ablations.}
    \label{tab:trace_tcomp_nom}
    \vspace{-0.1cm}
    \centering
    \begin{threeparttable}
    \begin{tabular}{lcc}
        \toprule
        Method & $\Delta T_\mathrm{race}$ & ${\Bar{T}_\mathrm{comp}}$ \\
        \midrule
        MINLP (offline) & +\qty{0.000}{s} & \qty{3.370}{s} \\ 
        RL & +\qty{1.170}{s} & \qty{0.021}{s} \\
        \midrule
        \textit{Hybrid RL-MPC}\tnote{1} & & \\
        $N=5$ & +\qty{0.784}{s}&  \qty{0.013}{s}  \\
        $N=10$ &+\qty{0.320}{s} &  \qty{0.020}{s} \\
        $N=15$ & +\qty{0.095}{s} &  \qty{0.026}{s} \\
        $N=40$ & +\qty{0.035}{s} &  \qty{0.069}{s} \\
        \midrule
        \textit{Ablation} & & \\
        w/o actor warm-start $N=5$ & +\qty{1.478}{s} & \qty{0.013}{s}\\
        w/o actor warm-start $N=15$ & +\qty{0.095}{s} & \qty{0.029}{s}\\
        w/ critic $N=5$ & +\qty{0.849}{s} & \qty{0.114}{s}\\
        w/ critic $N=15$ & +\qty{0.094}{s} & \qty{0.246}{s} \\
        \bottomrule
    \end{tabular}
    \begin{tablenotes}
        \footnotesize
        \item[1] With an actor warm-start and zero terminal cost, as identified in the ablation study.
    \end{tablenotes}
    \end{threeparttable}
    \vspace{-0.25cm}
\end{table}

The standalone RL agent achieves a suboptimality of~$+$\qty{1.17}{s} relative to the MINLP. To put this in context, F1 races are typically decided by margins of a few seconds, making this gap strategically relevant. The source of suboptimality lies in the continuous energy allocation, as the pit stop decisions are identical to the MINLP solution. This confirms that the discrete decisions are well-learned and the continuous actions require further refinement. As visible in Fig.~\ref{fig:results_benchmark}, the fuel allocation follows a bang-bang strategy rather than a gradual decrease as in the MINLP solution, and the battery is charged during the lap after the pit stop rather than during the pit stop lap, missing the recuperation potential of braking into the pit lane.

\begin{figure}
    \vspace{0.15cm}
    \centering
    \hspace{-1cm}
        \begin{tikzpicture}[trim axis right]

\ifdefined\isMainDocument
  \def\datapath{Results_Benchmark/}
  \def\width{\columnwidth}
  \def\height{0.35*\columnwidth}
\else
  \def\datapath{}
  \def\width{15cm}
  \def\height{5cm}
\fi

\begin{groupplot}[
    group style={
        group size=1 by 3,
        vertical sep=0.5cm,
        xlabels at=edge bottom,
        xticklabels at=edge bottom,
    },
    width=\width,
    height=\height,
    xmin=0, xmax=56,
    xtick={0,10,20,30,40,50},
    xmajorgrids,
    ymajorgrids,
    tick label style={font=\scriptsize},
    tick style={semithick, color=black},
    axis background/.style={fill=white},
]

\nextgroupplot[
    ylabel=\small{$\Delta E_\mathrm{f,all}$},
    y unit=-,
    ytick={90,100,110},
    yticklabels={90\%,100\%,110\%},
    ymin=85, ymax=115,
    ylabel style={yshift=-30pt},
    yticklabel style={text width=1.5cm, align=right},
    legend style={at={(1,1.6)}, anchor=north east, legend cell align=left,
                  align=left, draw=white!15!black, font=\scriptsize,
                  /tikz/every even column/.append style={column sep=2mm}},
]
\addplot [color=gray, dashed, line width=0.5pt, forget plot]
    table[]{lower_bound_fuel.tsv};
\addplot [color=gray, dashed, line width=0.5pt, forget plot]
    table[]{upper_bound_fuel.tsv};
\addplot [color=black, line width=1.5pt]
    table[]{Results_Benchmark/hybrid_vs_minlp_1.tsv};
\addlegendentry{MINLP}
\addplot [color=red, line width=0.75pt]
    table[]{Results_Benchmark/RL_nominal_actions-1.tsv};
\addlegendentry{RL}
\addplot [color=blue, line width=0.75pt]
    table[]{Results_Benchmark/hybrid_vs_minlp-2.tsv};
\addlegendentry{Hybrid RL-MPC $N_{15}$}
\addplot [color=cyan, dashed, line width=0.75pt]
    table[]{Results_Benchmark/hybrid_vs_minlp_nhor40-2.tsv};
\addlegendentry{Hybrid RL-MPC $N_{\geq40}$}

\nextgroupplot[
    ylabel=\small{$\Delta E_\mathrm{b,all}$},
    y unit=-,
    ylabel style={yshift=-30pt},
    ymin=-1.3, ymax=1.3,
    yticklabel style={text width=1.5cm, align=right},
]
\addplot [color=gray, dashed, line width=0.5pt, forget plot]
    table[]{upper_bound_bat.tsv};
\addplot [color=gray, dashed, line width=0.5pt, forget plot]
    table[]{lower_bound_bat.tsv};
\addplot [color=black, line width=1.5pt, forget plot]
    table[]{Results_Benchmark/hybrid_vs_minlp-5.tsv};
\addplot [color=red, line width=0.75pt]
    table[]{Results_Benchmark/RL_nominal_actions-4.tsv};
\addplot [color=blue, line width=0.75pt, forget plot]
    table[]{Results_Benchmark/hybrid_vs_minlp-6.tsv};
\addplot [color=cyan, dashed, line width=0.75pt, forget plot]
    table[]{Results_Benchmark/hybrid_vs_minlp_nhor40-6.tsv};

\nextgroupplot[
    xlabel={Lap},
    ymin=-0.3, ymax=3.3,
    ytick={0,1,2,3},
    yticklabels={No pit, Pit for $S$, Pit for $M$, Pit for $H$},
    yticklabel style={text width=1.5cm, align=right},
    xticklabels={0,10,20,30,40,50},
]
\addplot [ycomb, color=black, mark=o, mark size=1.5pt, line width=0.5pt]
    table[]{Results_Benchmark/hybrid_vs_minlp-9.tsv};
\addplot [ycomb, color=red, mark=o, mark size=1.5pt, line width=0.5pt]
    table[]{Results_Benchmark/RL_nominal_actions-7.tsv};
\addplot [ycomb, color=blue, mark=o, mark size=2.5pt, line width=0.5pt]
    table[]{Results_Benchmark/hybrid_vs_minlp-10.tsv};
\addplot [ycomb, color=cyan, mark=o, mark size=2.5pt, line width=0.5pt]
    table[]{Results_Benchmark/hybrid_vs_minlp_nhor40-10.tsv};

\end{groupplot}
\end{tikzpicture}
        \vspace{-0.3cm}
    \caption{Comparison of the MINLP offline benchmark (black), the trained RL agent (red), the online hybrid RL-MPC method for horizons of $N=15$ laps (blue) and $N\geq 40$ laps (cyan, dashed), with constraint bounds shown as gray dashed lines.}
    \label{fig:results_benchmark}
    \vspace{-0.5cm}
\end{figure}

\subsubsection{Ablation study} The framework of Section~\ref{sec:framework} supplies the MPC with a discrete action sequence, a continuous warm-start, and optionally the critic as terminal cost. How much the latter two contribute is problem-dependent~\cite{AC4MPC}, and we quantify both. Since a shorter horizon leaves the MPC less room to correct the learned components, we evaluate them at a short and a medium horizon, $N=5$ and $N=15$.

According to Table~\ref{tab:trace_tcomp_nom}, the warm-start is decisive at short horizons, where at $N=5$ the cold-started variant has a race time that is \SI{694}{\milli\second} slower. Local minima arise primarily from the discrete pit stop decisions, which are supplied by the RL actor, so at longer horizons the warm-start no longer improves the solution and reduces to a modest computational benefit of a few milliseconds.

Table~\ref{tab:trace_tcomp_nom} shows that the critic as terminal cost provides no measurable performance benefit at horizons of $N=15$ and above, where the two variants differ by about \SI{1}{\milli\second}, and at $N=5$ it degrades the solution with a race time slower by \SI{65}{\milli\second}. At all horizons, the mean solve time increases by roughly an order of magnitude, since the critic network must be evaluated within the optimization. This is consistent with~\cite{AC4MPC}. When the actor combined with MPC refinement is close to optimal, value augmentation offers limited benefit while introducing the critic's approximation error, an effect amplified as the horizon shortens.

That the critic adds little at deployment reflects the quality of the actor produced by the actor-critic training. Because the critic approximates the state-action cost function of the underlying hybrid MDP, the actor is guided during training toward optimizing that cost function directly. This is what enables the actor to reach the observed level of performance at deployment. The critic also provides a terminal cost whenever the actor is not close to optimal. Replacing it with hand-designed terminal costs based on the remaining number of laps and the deviation from a nominal energy allocation loses approximately \SI{5}{\second}. An accurate terminal cost for this economic objective would have to account for effects such as tire degradation and pit stop timing, which the learned critic captures without explicit design.

\subsubsection{Hybrid RL-MPC performance} Based on the ablation study, we analyze the framework with the actor warm-start and a zero terminal cost. At $N\geq40$, above which suboptimality and strategy remain constant, the framework achieves a residual gap of only $+$\qty{35}{ms}, which is negligible given a total race time of approximately 90 minutes. The remaining differences between the MINLP and the long-horizon hybrid method visible in Fig.~\ref{fig:results_benchmark} are due to numerical discrepancies between the offline MINLP solution and the closed-loop receding-horizon implementation. 

Table \ref{tab:trace_tcomp_nom} reveals a clear trade-off between suboptimality and computation time. As the horizon increases from $N=5$ to $N=40$, the mean NLP solve time grows from \qty{13}{ms} to \qty{69}{ms}, while suboptimality decreases from $+$\qty{0.784}{s} to $+$\qty{0.035}{s}, reflecting the truncation of the planning horizon, which is consistent with theoretical predictions of~\cite{BertsekasAlphaZero, AC4MPC}. In the following, we proceed with $N=15$, which achieves a suboptimality of $+$\qty{95}{ms}, a $92\%$ reduction in suboptimality over the standalone RL agent, at a mean solve time of~\qty{26}{ms}. As Fig.~\ref{fig:results_benchmark} shows, it recovers the gradual fuel decrease and the battery recuperation during the pit stop lap that the standalone RL agent misses. The residual gap is mainly due to suboptimal fuel allocation before the first pit stop, forcing a lower allocation in the remainder of the race compared to the MINLP.

Once trained, the RL actor yields the discrete decisions through a forward pass, leaving an integer-free NLP whose size grows linearly in $N$. A receding-horizon mixed-integer formulation would instead require a search over $|U_\mathrm{d}|^N$~discrete sequences at each step. The hybrid RL-MPC framework thus bridges the gap between the fast but suboptimal standalone RL agent and complex, time-consuming mixed-integer optimization.

\subsection{Architectural adaptability}

We demonstrate the adaptability of the hybrid RL-MPC framework by extending it to a traffic scenario without retraining the RL agent. We compare a nominal variant, the implementation from Section~\ref{subsec:benchmark} with $N=15$ operated without any modification, against a modified variant in the same closed-loop environment with traffic present. For the modified variant, we incorporate a model of the traffic into the MPC formulation, demonstrating that the framework adapts to new scenarios at zero retraining cost and low implementation effort.

Traffic arises when a pit stop takes longer than expected and the driver rejoins behind slower competitors, increasing their lap time through the aerodynamic wake of the car ahead. Critically, traffic is a situation where an additional pit stop is undesirable, since the time lost in the pit lane would far outweigh the benefit of rejoining on clear track, meaning that the only available lever is the continuous energy allocation.

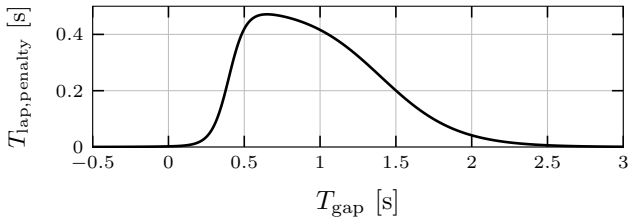
\begin{figure}
    \centering
    \vspace{0.05cm}
    \hspace{-1.0cm}
    \begin{tikzpicture}[trim axis right]

\ifdefined\isMainDocument
  \def\datapath{LapTimePenalty/}
  \def\width{\columnwidth}
  \def\height{0.4*\columnwidth}
\else
  \def\datapath{}
  \def\width{15cm}
  \def\height{5cm}
\fi

\begin{groupplot}[
    group style={
        group size=1 by 1,
        vertical sep=0.5cm,
        xlabels at=edge bottom,
        xticklabels at=edge bottom,
    },
    width=\width,
    height=\height,
    xmin=-0.5, xmax=3.0,
    xmajorgrids,
    ymajorgrids,
    tick label style={font=\scriptsize},
    tick style={semithick, color=black},
    axis background/.style={fill=white},
]

\nextgroupplot[
    ylabel=\small{$T_\mathrm{lap,penalty}$},
    y unit=s,
    xlabel={$T_\mathrm{gap}$},
    x unit=s,
    ymin=0, ymax=0.5,
    ylabel style={yshift=-30pt},
    yticklabel style={text width=1.5cm, align=right},
]
\addplot [color=black, line width=1.0pt]
    table[]{LapTimePenalty/traffic_penalty-1.tsv};

\end{groupplot}
\end{tikzpicture}
    \vspace{-0.3cm}
    \caption{Lap time penalty as a function of gap time to the car ahead, with a maximum of approximately \qty{0.5}{s} at a gap of \qty{0.6}{s}, vanishing both in close proximity to the competitor and beyond \qty{2.5}{s}.}
    \label{fig:LapTimePenalty}
    \vspace{-0.5cm}
\end{figure}
In the following, we introduce traffic after the first pit stop. The lap time penalty $T_\mathrm{lap,penalty}$ due to traffic, shown in Fig.~\ref{fig:LapTimePenalty}, is modeled as a heuristic function of the gap time to the car ahead, peaking at approximately \qty{0.5}{s} for a gap of \qty{0.6}{s}, vanishing beyond \qty{2.5}{s} as the wake dissipates, and dropping to zero at small gaps as the cars are side by side. The gap time is introduced as a new continuous state in the MPC layer and updated via
\begin{equation}
    T_\mathrm{gap}(j+1) = T_\mathrm{gap}(j) + T_\mathrm{lap,traffic}(j) - T_\mathrm{lap,opp}(j),
    \label{eq:Tgap}
\end{equation}
initialized at 2 seconds at the lap of the first pit stop, where $T_\mathrm{lap,opp}$ is the opponent's lap time, which we model as \qty{0.5}{s} slower than the lap time in a nominal scenario. The lap time in the traffic scenario then becomes
\begin{equation}
    T_\mathrm{lap,traffic}(j) = T_\mathrm{lap}(j) + T_\mathrm{lap,penalty}\left(T_\mathrm{gap}(j)\right),
\end{equation}
where $T_\mathrm{lap}$ follows from \eqref{eq:Tlap}. The nominal implementation has no model of \eqref{eq:Tgap} in its MPC formulation and therefore no estimate of the lap time penalty $T_\mathrm{lap,penalty}$, while the modified variant incorporates \eqref{eq:Tgap} as an additional state. In both cases, the RL agent is identical to the one in the nominal scenario presented in Section \ref{subsec:benchmark}.

Since the modified variant explicitly predicts the traffic penalty, it will naturally outperform the nominal one. The goal of this case study is to quantify how much performance a simple extension recovers at zero retraining cost.

\begin{figure}
    \centering
    \hspace{-1cm}
    \begin{tikzpicture}[trim axis right]

\ifdefined\isMainDocument
  \def\datapath{Results_traffic/}
  \def\width{\columnwidth}
  \def\height{0.35*\columnwidth}
\else
  \def\datapath{}
  \def\width{15cm}
  \def\height{5cm}
\fi

\begin{groupplot}[
    group style={
        group size=1 by 4,
        vertical sep=0.5cm,
        xlabels at=edge bottom,
        xticklabels at=edge bottom,
    },
    width=\width,
    height=\height,
    xmin=0, xmax=56,
    xtick={0,10,20,30,40,50},
    xmajorgrids,
    ymajorgrids,
    tick label style={font=\scriptsize},
    tick style={semithick, color=black},
    axis background/.style={fill=white},
]

\nextgroupplot[
    ylabel=\small{$\Delta E_\mathrm{f,all}$},
    y unit=-,
    ytick={90,100,110},
    yticklabels={90\%,100\%,110\%},
    ymin=85, ymax=115,
    ylabel style={yshift=-30pt},
    yticklabel style={text width=1.5cm, align=right},
    legend style={at={(1,1)}, anchor=north east, legend cell align=left,
                  align=left, draw=white!15!black, font=\scriptsize,
                  /tikz/every even column/.append style={column sep=2mm}},
]
\addplot [color=gray, dashed, line width=0.5pt, forget plot]
    table[]{lower_bound_fuel.tsv};
\addplot [color=gray, dashed, line width=0.5pt, forget plot]
    table[]{upper_bound_fuel.tsv};
\addplot [color=black, line width=1.0pt]
    table[]{Results_traffic/traffic_hybrid_actions-1.tsv};
\addlegendentry{No traffic model}
\addplot [color=blue, dashed, line width=1.0pt]
    table[]{Results_traffic/traffic_hybrid_actions-2.tsv};
\addlegendentry{With traffic model}

\nextgroupplot[
    ylabel=\small{$\Delta E_\mathrm{b,all}$},
    y unit=-,
    ylabel style={yshift=-30pt},
    ymin=-1.3, ymax=1.3,
    yticklabel style={text width=1.5cm, align=right},
]
\addplot [color=gray, dashed, line width=0.5pt, forget plot]
    table[]{upper_bound_bat.tsv};
\addplot [color=gray, dashed, line width=0.5pt, forget plot]
    table[]{lower_bound_bat.tsv};
\addplot [color=black, line width=1.0pt, forget plot]
    table[]{Results_traffic/traffic_hybrid_actions-5.tsv};
\addplot [color=blue, dashed, line width=1.0pt, forget plot]
    table[]{Results_traffic/traffic_hybrid_actions-6.tsv};

\nextgroupplot[
    ymin=-0.3, ymax=3.3,
    ytick={0,1,2,3},
    yticklabels={No pit, Pit for $S$, Pit for $M$, Pit for $H$},
    yticklabel style={text width=1.5cm, align=right},
]
\addplot [ycomb, color=black, mark=o, mark size=1.5pt, line width=0.5pt]
    table[]{Results_traffic/traffic_hybrid_actions-9.tsv};
\addplot [ycomb, color=blue, mark=o, mark size=2.5pt, line width=0.5pt]
    table[]{Results_traffic/traffic_hybrid_actions-10.tsv};

\nextgroupplot[
    xlabel={Lap},
    ylabel=\small{$T_\mathrm{gap}$},
    y unit= s,
    ylabel style={yshift=-30pt},
    ymin=-8, ymax=2.5,
    yticklabel style={text width=1.5cm, align=right},
    xticklabels={0,10,20,30,40,50},
]
\addplot [color=black, line width=1.0pt, forget plot]
    table[]{Results_traffic/traffic_hybrid_gap-1.tsv};
\addplot [color=blue, dashed, line width=1.0pt, forget plot]
    table[]{Results_traffic/traffic_hybrid_gap-2.tsv};

\end{groupplot}
\end{tikzpicture}
    \vspace{-0.3cm}
    \caption{Comparison of the hybrid RL-MPC without traffic model (black) and with traffic model \eqref{eq:Tgap} (blue), with constraint bounds shown as gray dashed lines. The bottom plot shows the gap time to the competitor, where positive values indicate being behind.}
    \vspace{-0.5cm}
    \label{fig:results_traffic}
\end{figure}

Fig. \ref{fig:results_traffic} compares the resulting race strategies, showing the same quantities as in Fig. \ref{fig:results_benchmark} and the gap time to the competitor. Both implementations keep the pit stop strategy identical to the traffic-free scenario from Fig.~\ref{fig:results_benchmark}, confirming that an additional stop is not beneficial given the time lost in the pit lane.

The nominal hybrid RL-MPC retains the energy allocation of the nominal scenario in Fig. \ref{fig:results_benchmark}, as it does not have a traffic model. The car eventually overtakes by virtue of the competitor being slower. However, this happens only much later, accumulating a substantial lap time penalty of \qty{6.87}{s} compared to the traffic-free scenario.

The modified hybrid RL-MPC responds immediately through the MPC layer. Right after the first pit stop, the modified controller increases fuel consumption and recharges the battery, exploiting load-point shifting to harvest electric energy while incurring the traffic penalty anyway. Once sufficient battery energy is accumulated, it is discharged aggressively to overtake the traffic and stay ahead, after which the energy allocation returns to the nominal strategy. The overtake occurs significantly earlier, and the accumulated lap time penalty is substantially reduced, with the modified implementation finishing \qty{5.19}{s} ahead of the nominal one, which is a race-deciding margin. 

This result demonstrates that the MPC layer can be adapted to new scenarios by incorporating scenario-specific constraints without modifying or retraining the RL agent.

\section{CONCLUSION}
\label{sec:conclusion}
We proposed a hybrid RL-MPC framework for mixed-integer optimal control. By training the RL agent on the full hybrid action space, the learned critic approximates the state-action cost function of the underlying hybrid MDP. During deployment, the RL actor is rolled out over the prediction horizon to parametrize a continuous NLP-MPC through the discrete action sequence. Additionally, it provides a warm-start for the continuous variables, and the learned critic can serve as a terminal cost. We proved recursive feasibility under a structural assumption on the discrete decisions, and validated the framework on an F1 race strategy problem, outperforming a standalone RL agent and achieving near-optimal performance relative to an offline MINLP benchmark. An ablation study showed that for the F1 race setting the actor warm-start is decisive at short horizons, while the critic as terminal cost provided no benefit at the horizons considered, though its role in training the actor remains essential. The hybrid architecture further enables adaptation to unseen disturbances, demonstrated on a traffic scenario, yielding a \qty{5.19}{s} advantage without any retraining of the RL components.

Two directions are of particular interest for future work. First, deriving formal suboptimality bounds similar to \cite{AC4MPC} would strengthen the theoretical foundation, and a quantitative comparison against a receding-horizon mixed-integer formulation would complement the offline benchmark used here. Second, relaxing Assumption \ref{ass:one} to settings where discrete decisions affect feasibility could be addressed through safety filters \cite{safetyfilter} or control barrier functions \cite{cbf}, extending the framework to broader hybrid action space systems.

\section*{ACKNOWLEDGMENTS}
We thank Ilse New for her careful proofreading and helpful comments, as well as Marc-Philippe Neumann and Mohammad Moradi for fruitful discussions and feedback.

\textit{Use of generative artificial intelligence:} The chatbot Claude~(Anthropic) was used to assist with grammar and phrasing in selected paragraphs of this manuscript.

\end{document}